\documentclass[runningheads]{llncs}
\usepackage[T1]{fontenc}
\usepackage{graphicx}
\usepackage{amssymb}
\usepackage{amsmath}
\usepackage{array}
\usepackage{booktabs}
\usepackage[table,xcdraw]{xcolor}
\usepackage{diagbox}
\usepackage{cite}
\begin{document}
\title{Successor-bispecial strings with minimum Burrows--Wheeler transform runs}
%
 
\author{Vinicius T.~V.~Date\inst{1} \and Leandro M.~Zatesko\inst{2, 1}}
\authorrunning{V.~T.~V.~Date and L.~M.~Zatesko}

\institute{Postgraduate Program in Informatics, Federal University of Paraná \and Federal University of Technology -- Paraná\\
\email{vtvdate@inf.ufpr.br, zatesko@utfpr.edu.br}}
\maketitle

\begin{abstract}
We study \emph{successor-bispecial strings} over an alphabet \(\Sigma\) of size \(\sigma\), a minimal-branching analogue of de~Bruijn strings, and ask how few Burrows--Wheeler transform (BWT) runs are possible. In a de~Bruijn string of order~\(k\), every \((k{-}1)\)-gram has all \(\sigma\) right-extensions; here, every \(k{-}1\)-gram has exactly two right-extensions, determined by a successor rule, which also forces two left-extensions.
For order~\(3\), we construct an explicit family \(B_\sigma^{(3)}\), for every \(\sigma \geq 2\), whose cyclic BWT has \(r_c = \sigma^2 + 2\) runs. A suitable terminated linearization has the same run count, \(r = r_c = \sigma^2 + 2\), while the smallest suffixient set has size \(\chi = 2\sigma^2 + 1\). The ratio \(\chi/r = 2 - 3/(\sigma^2 + 2)\) nearly saturates the known bound \(\chi/r \leq 2\), which we have previously shown to be asymptotically tight. Compared with our earlier general construction, this improves the gap from \(O(1/\sigma)\) to \(O(1/\sigma^2)\).
We also show that the order-\(3\) pattern appears as a blockwise two-row projection of normalized linear-feedback shift register (LFSR) de~Bruijn sequences over \(\mathbb F_q\), when primitive trinomials \(x^3 - x + c\) exist. For higher orders, we prove a general lower bound \(r_c \geq \sigma^{k-1} + 2\) for every \(\sigma \geq 3\) in the exact-length regime and analyze the boundary-merged higher-order candidate using the last-to-first (LF) permutation: it fails for \(k = 4\) and all \(\sigma \geq 3\), while verified \(k = 5\) instances for \(\sigma \in \{3,4\}\) yield \(\chi/r\) ratios exceeding~\(1.96\).
\end{abstract}

\section{Introduction}\label{sec:intro}

The Burrows--Wheeler transform (BWT) orders the rotations or suffixes of a string and records the preceding symbols in the resulting last column; the number of equal-letter runs in this column is denoted by \(r\). We ask how small this run count can be in a restricted de~Bruijn-type regime where every context has two right-extensions determined by a successor rule, which in turn forces two left-extensions. This structural question also has consequences for the measure \(\chi\), the size of a smallest suffixient set~\cite{depuydt2023suffixient}: Navarro et al.~\cite{navarro2025smallest} proved that \(\chi \leq 2r\), while empirical results on genomic data suggest that the gap is often much smaller~\cite{cenzato2024computing}. In~\cite{date2025neartightness} we showed that \(\chi \leq 2r\) is asymptotically tight, both as the alphabet size~\(\sigma\) grows and as the string length grows over a binary alphabet. Here, the large \(\chi/r\) ratio is a consequence of a structural question about BWT runs in a restricted de~Bruijn-type regime.
 
Full de~Bruijn strings of order \(k\)~\cite{lothaire1997} are \(\sigma\)-branching at every context: every \((k{-}1)\)-gram has all \(\sigma\) right-extensions and, since every \(k\)-gram occurs, all \(\sigma\) left-extensions, so every context is bispecial, having multiple
extensions on both sides. At the opposite, minimally branching extreme, we call a cyclic string \emph{successor-bispecial at order~\(k\)} if, for every \((k{-}1)\)-gram~\(u\), the only \(k\)-grams with prefix~\(u\) are \(u \cdot u[0]\) and \(u \cdot ((u[0]{+}1) \bmod \sigma)\). This regime separates the two roles of bispeciality: the successor rule on right-extensions is what makes the suffixient count large, while the induced left-extensions are what make the cBWT blocks orderly. In the length-\(2\sigma^{k-1}\) case, where each allowed \(k\)-gram occurs exactly once, the two rows associated with each \((k{-}1)\)-gram have distinct preceding symbols. Hence, each context block forces at least one internal BWT run boundary, and the natural low-run target is one forced boundary per context, up to global cyclic boundary effects.
We prove (Theorem~\ref{thm:cyclic_lower_bound}) that in the exact-length regime, no successor-bispecial cyclic string over an alphabet of size \(\sigma\geq3\) can have fewer than \(\sigma^{k-1}+2\) cyclic BWT runs. For \(\sigma=2\), exact-length successor-bispecial strings are precisely the binary de~Bruijn strings, and the matching bound \(2^{k-1}+2\), attained by the last column \(1(0011)^{2^{k-2}-1}010\), appears in the proof of~\cite[Theorem~3]{mantaci2017burrows}; our theorem extends it to every alphabet size.

For order~\(3\), we give an explicit family~\(B_\sigma^{(3)}\) for every \(\sigma \geq 2\) with a fully predictable cyclic BWT, attaining this lower bound whenever \(\sigma\geq3\), from which we derive closed-form values for the cyclic run count~\(r_c\), the ordinary run count~\(r\), and the suffixient-set size~\(\chi\). The cyclic BWT has only \(\sigma^2 + 2\) runs, and the ordinary BWT of a suitable terminated linearization has the same number of runs. Since the successor-bispecial structure forces \(\chi = 2\sigma^2 + 1\), we obtain (Theorem~\ref{theorem:ratio})
    \[
        \frac{\chi}{r}
        =
        \frac{2\sigma^2 + 1}{\sigma^2 + 2}
        =
        2 - \frac{3}{\sigma^2 + 2}.
    \]
Compared with the clustered alphabet-growing construction of~\cite{date2025neartightness}, the order-\(3\) family constructed here improves the ratio for every \(\sigma \geq 2\), from \(2\sigma/(\sigma + 1)\) to \((2\sigma^2 + 1)/(\sigma^2 + 2)\), and improves the asymptotic gap from \(O(1/\sigma)\) to \(O(1/\sigma^2)\). This does not settle the fixed-alphabet asymptotic regime; rather, it isolates a restricted de~Bruijn-type setting in which the suffixient count is forced to be large while the BWT remains highly structured.

The order-\(3\) block regularity also has an algebraic interpretation, extending the linear-feedback shift register (LFSR) phenomenon behind the binary construction of~\cite{date2025neartightness}. For prime powers \(q\) such that the trinomial \(x^3 - x + c\) is primitive over \(\mathbb F_q\) for some \(c \in \mathbb F_q\), normalized LFSR de~Bruijn sequences exhibit the same paired cBWT block regularity. More precisely, their normalized cBWT blocks are affine, and a two-row selection from each block recovers the paired low-run pattern of \(B_q^{(3)}\).

We also investigate higher-order analogues through the canonical boundary-merged candidate \(L_{\sigma,k}\), which extends the order-\(3\) cBWT pattern to arbitrary~\(k\). Using the last-to-first (LF) permutation, we prove that this candidate fails for \(k = 4\) and all \(\sigma \geq 3\): the obstruction is to the candidate, not to the existence of successor-bispecial strings of order~\(4\). We also verify that the candidate succeeds sporadically for \(k = 5\) with \(\sigma \in \{3,4\}\), giving \(\chi/r\) ratios exceeding~\(1.96\).

In Section~\ref{sec:prelim} we fix notation and recall standard definitions. In Section~\ref{sec:sbs} we define successor-bispecial strings and prove the exact-length lower bound. In Section~\ref{sec:construction} we present the order-\(3\) construction, and in Section~\ref{sec:measures} we derive the closed-form values of \(r_c\), \(r\), and~\(\chi\). In Section~\ref{sec:lfsr} we develop the LFSR interpretation over finite fields. In Section~\ref{sec:higher_open} we analyze the boundary-merged candidate, prove the \(k{=}4\) obstruction, report the \(k{=}5\) results, and state future directions.

\section{Preliminaries}\label{sec:prelim}

Let \(S\) be a string of length \(|S| = n\) over the alphabet \(\Sigma = \{0,1,\ldots,\sigma{-}1\}\) of size \(|\Sigma|=\sigma\). Strings are \(0\)-indexed. We write \(S[i]\) for the character at position~\(i\) and \(S[i..j]\) for the substring from position~\(i\) to~\(j\) inclusive. A \emph{\(k\)-gram} is a substring of length~\(k\), and a \emph{rotation} of~\(S\) is \(S[i..n{-}1]\cdot S[0..i{-}1]\) for some \(0\leq i< n\).

We use a sentinel symbol \(\$\) satisfying \(\$ \notin \Sigma\) and \(\$ < a\) for all \(a\in\Sigma\). For a string \(w\in\Sigma^*\), we write \(w\$\) for the \emph{terminated string}.

A \emph{cyclic string} of length~\(n\) is a string whose indices are read modulo~\(n\). The \emph{order-\(k\) linearization} of a cyclic string~\(S\) is the string \(\ell_k(S)=S\cdot S[0..k{-}2]\). Its terminated form is \(\ell_k(S)\$\). The appended prefix ensures that all cyclic \((k{-}1)\)-grams and \(k\)-grams appear as ordinary substrings before the sentinel.

The \emph{Burrows--Wheeler transform} \(\mathrm{BWT}(w\$)\)~\cite{burrows1994} is obtained by sorting all rotations of \(w\$\) as rows of a matrix in lexicographic order and reading the last column. A \emph{run} is a maximal contiguous block of equal symbols. We write \(r(w)\) for the number of runs in \(\mathrm{BWT}(w\$)\).

For cyclic strings, the \emph{cyclic Burrows--Wheeler transform} \(\mathrm{cBWT}(S)\) sorts the \(n\) cyclic rotations of~\(S\) lexicographically and reads the last column. We write \(r_c(S)\) for the number of runs in the resulting linear string \(\mathrm{cBWT}(S)\).

Let \(w\in\Sigma^*\). A substring~\(x\) of \(w\$\) is \emph{right-maximal} if it has at least two  \emph{right-extensions} in \(w\$\), i.e.,~two distinct \(a\in\Sigma\cup\{\$\}\) for which \(xa\) occurs as a substring in \(w\$\). The set of right-extensions of \(w\) is
    \[
      E_r(w)
      =
      \{\, xa \mid x \text{ is right-maximal in } w\$,\;
          a\in\Sigma\cup\{\$\},\;
          xa \text{ occurs in } w\$ \,\}.
    \]
The \emph{super-maximal right-extensions} are the elements of \(E_r(w)\) that are not a proper suffix of any other element of \(E_r(w)\). We denote this set by \(S_r(w)\).

A set \(P\subseteq\{0,\ldots,|w|\}\) is \emph{suffixient} ~\cite{depuydt2023suffixient,navarro2025smallest} if for every \(x\in E_r(w)\) there exists \(j\in P\) such that \(x\) is a suffix of \((w\$)[0..j]\). The size of a smallest suffixient set is denoted by \(\chi(w)\). Navarro et al.~\cite{navarro2025smallest} show that \(\chi(w)=|S_r(w)|\).

\begin{example}\label{ex:suffixient}
    Let \(w=\texttt{abaa}\), so \(w\$=\texttt{abaa\$}\). The right-maximal substrings are \(\epsilon\) and \(\texttt{a}\), giving 
    $        E_r(w)=\{\texttt{a},\texttt{b},\texttt{\$},\texttt{aa},\texttt{ab},\texttt{a\$}\}$.
        The super-maximal elements are
    $
        S_r(w)=\{\texttt{aa},\texttt{ab},\texttt{a\$}\}
    $,
    so \(\chi(w)=3\).
\end{example}

For a cyclic string~\(S\) and a string \(u\in\Sigma^{k-1}\), write
    \[
        R_S(u)
        =
        \{\, c\in\Sigma : uc \text{ occurs as a } k\text{-gram of } S \,\}
    \]
for the set of right-extensions of~\(u\) in~\(S\). We call the \((k{-}1)\)-grams of \(S\) \emph{contexts}.

In the terminology of combinatorics on words, a substring with at least two right-extensions is \emph{right-special}; symmetrically, a substring with at least two left-extensions is \emph{left-special}, and a substring that is both is \emph{bispecial}.

\section{Successor-bispecial strings and the cyclic lower bound}\label{sec:sbs}

Our definition below fixes only the right-extensions of each context, but Proposition~\ref{prop:left_extensions} shows that this already forces two fixed left-extensions. We therefore call the resulting strings successor-bispecial. We reserve the term right-maximal for substrings of the terminated string \(w\$\), following the suffixient-set definition above.

\begin{definition}\label{def:srs}
For \(\sigma\geq2\), a cyclic string~\(S\) over \(\Sigma=\{0,\ldots,\sigma{-}1\}\) is \emph{successor-bispecial at order~\(k\)} if, for every \(u\in\Sigma^{k-1}\), \(R_S(u) = \{\, u[0],\; (u[0]{+}1) \bmod \sigma \,\}\).
\end{definition}

From this definition, in a successor-bispecial cyclic string~\(S\) at order~\(k\), every \((k{-}1)\)-gram over~\(\Sigma\) occurs, since \(R_S(u)\) is required to be nonempty for every \(u\). The set of occurring \(k\)-grams is exactly
    \[
        \{\, uc \mid u\in\Sigma^{k-1},\;
            c\in\{u[0], (u[0]{+}1) \bmod\sigma\}\,\}.
    \]
This set has size \(2\sigma^{k-1}\). In particular, if a successor-bispecial cyclic string at order~\(k\) has length \(2\sigma^{k-1}\), then each allowed \(k\)-gram occurs exactly once, since the string has exactly \(2\sigma^{k-1}\) \(k\)-gram positions and no disallowed \(k\)-gram can occur. We call this the \emph{exact-length regime}.

The term \emph{successor} reflects that the two extensions of each context~\(u\) are the symbol \(u[0]\) and its cyclic successor \((u[0]{+}1) \bmod \sigma\).

\begin{proposition}\label{prop:left_extensions}
Let \(S\) be successor-bispecial at order \(k\). Then every context \(u\) of \(S\) is left-special. More precisely, if \(t = u[k-2]\), then the two left-extensions of \(u\) are \(t - 1\) and \(t\), modulo \(\sigma\).
\end{proposition}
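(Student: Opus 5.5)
We argue directly from the characterization of occurring \(k\)-grams given after Definition~\ref{def:srs}: a \(k\)-gram \(x = x[0]x[1]\cdots x[k-1]\) occurs in \(S\) if and only if \(x[k-1]\in\{x[0],\,(x[0]{+}1)\bmod\sigma\}\).

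Fix a context \(u\in\Sigma^{k-1}\) and write \(t=u[k-2]\). A symbol \(a\in\Sigma\) is a left-extension of \(u\) exactly when \(au\) occurs as a \(k\)-gram of \(S\) (cyclically). The \(k\)-gram \(au\) has first symbol \(a\) and last symbol \(t\). By the characterization above, \(au\) occurs if and only if \(t=a\) or \(t=(a{+}1)\bmod\sigma\), that is, if and only if \(a=t\) or \(a=(t{-}1)\bmod\sigma\).

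Hence the set of left-extensions of \(u\) is exactly \(\{t-1,\,t\}\) modulo \(\sigma\). Since \(\sigma\geq2\), these two residues are distinct, so \(u\) has two left-extensions and is left-special. Every \((k{-}1)\)-gram occurs, so this applies to every context.

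The only step needing care is the characterization of occurring \(k\)-grams, and it is already available. Each \(k\)-gram \(vc\) with \(v\in\Sigma^{k-1}\) occurs if and only if \(c\in R_S(v)\), which by Definition~\ref{def:srs} means \(c\in\{v[0],\,(v[0]{+}1)\bmod\sigma\}\). Here \(v[0]\) is the first symbol of the \(k\)-gram, so the criterion reads as stated. Applied with \(v=au[0..k-3]\) and \(c=t\), it gives \(v[0]=a\), which is the condition used above.

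For \(k=2\), \(u\) is the single symbol \(t\), and the same computation applies verbatim with \(v=a\).
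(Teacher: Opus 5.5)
Your proof is correct and is the paper's argument. Both reduce ``\(a\) is a left-extension of \(u\)'' to the successor rule applied to the context \(a\,u[0..k-3]\), which gives \(t\in\{a,(a+1)\bmod\sigma\}\), i.e., \(a\in\{t,(t-1)\bmod\sigma\}\); you package this as a single biconditional on occurring \(k\)-grams where the paper proves the two inclusions separately.
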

\begin{proof}
    Let \(u\) be a context of \(S\), and let \(t = u[k-2]\). If \(c\) precedes an occurrence of \(u\), then \(cu[0..k-3]\) is a context starting with \(c\), and Definition~\ref{def:srs} gives \(t \in \{c, (c + 1) \bmod \sigma\}\). Thus \(c \in \{t, (t - 1) \bmod \sigma\}\).

    Conversely, both possibilities occur. Indeed, for \(c = t\) and for \(c = (t - 1) \bmod \sigma\), the context \(cu[0..k-3]\) has \(t\) as one of its prescribed right-extensions. Hence both \(tu\) and \(((t - 1) \bmod \sigma)u\) occur. Therefore the left-extensions of \(u\) are exactly \(t - 1\) and \(t\), modulo \(\sigma\). \qed
\end{proof}

Thus, the right-extension successor rule also forces a successor rule for left-extensions. We next record the lower bound that makes the run count targeted in the rest of the paper optimal.

We use the standard cBWT/LF characterization: a string is the cBWT of a single primitive cyclic string, i.e., one that is not a nontrivial repetition, if and only if its induced last-to-first (LF) permutation (mapping each position of the last column to the position of the same symbol occurrence in the first column) is a single cycle~\cite{likhomanov2011two}.

\begin{theorem}\label{thm:cyclic_lower_bound}
    Let \(\sigma \geq 3\), \(k \geq 3\), and let \(S\) be successor-bispecial at order \(k\), with \(|S| = 2\sigma^{k-1}\). Then \(r_c(S) \geq \sigma^{k-1} + 2\).
\end{theorem}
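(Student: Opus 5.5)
The plan is to count run boundaries block by block. Sort the \(n = 2\sigma^{k-1}\) cyclic rotations of \(S\) and group consecutive rows by their length-\((k{-}1)\) prefix. In the exact-length regime every context \(u\) occurs exactly twice, since it has exactly two right-extensions and each allowed \(k\)-gram occurs once. So the cBWT splits into \(\sigma^{k-1}\) consecutive blocks of two rows, ordered lexicographically by \(u\). By Proposition~\ref{prop:left_extensions}, the two last-column symbols of block \(u\) are exactly \(\{t-1, t\}\) (mod \(\sigma\)), where \(t = u[k-2]\). These are distinct, so every block contains one internal run boundary. This gives at least \(\sigma^{k-1}\) boundaries, hence \(r_c(S) \geq \sigma^{k-1}+1\). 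It remains to show that some boundary must also occur between two consecutive blocks.

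Suppose, for contradiction, that there is no inter-block boundary. Consecutive contexts in lexicographic order have last symbols \(t\) and \(t+1 \bmod \sigma\): the last coordinate cycles \(0,1,\ldots,\sigma-1,0,\ldots\), and each block index \(j\) has \(t = j \bmod \sigma\). Say a block with last symbol \(t\) is \emph{forward} if it reads \((t-1,t)\) and \emph{backward} if it reads \((t,t-1)\).
\begin{itemize}
\item A backward block \((t,t-1)\) would need the next block, with symbol set \(\{t,t+1\}\), to begin with \(t-1\). This is impossible for \(\sigma \geq 3\), and every block has a successor except the last.
\item The last block (\(u=(\sigma-1)^{k-1}\)) cannot be backward either. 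Its predecessor is forward, hence ends in \(\sigma-2\), while a backward last block would begin with \(\sigma-1\).
\end{itemize}
So every block is forward, and the cBWT is forced to be exactly
\[
L^* = \bigl((\sigma{-}1)\,0\bigr)(0\,1)(1\,2)\cdots\bigl((\sigma{-}2)(\sigma{-}1)\bigr)\ \text{repeated } \sigma^{k-2} \text{ times},
\]
which has exactly \(\sigma^{k-1}+1\) runs. This is where \(\sigma \geq 3\) is used; for \(\sigma=2\) the argument collapses, consistent with the binary example.

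The main step is to show that \(L^*\) is not the cBWT of any primitive cyclic string, via the LF characterization~\cite{likhomanov2011two}: I would exhibit a proper cycle of its LF permutation. Concretely:
\begin{itemize}
\item Index rows by pairs (block \(j\), position \(p \in \{0,1\}\)) and write \(j\) in base \(\sigma\).
\item The first column is \(0^{2\sigma^{k-2}}1^{2\sigma^{k-2}}\cdots\), so symbol \(a\) occupies rows \(2\sigma^{k-2}a, \ldots, 2\sigma^{k-2}(a+1)-1\).
\item Symbol \(a\) appears in \(L^*\) at position \(1\) of blocks with \(t=a\) and at position \(0\) of blocks with \(t=a+1\), giving exactly two occurrences in every window of \(\sigma\) consecutive blocks.
\item The rank of the occurrence in block \(j\) is therefore \(2\lfloor j/\sigma\rfloor\) plus a correction in \(\{0,1\}\) depending on whether \(t=0\) and \(a=\sigma-1\).
\end{itemize}
It follows that LF maps block \(j = (j_{k-2}\cdots j_0)_\sigma\) essentially to block \((a\, j_{k-2}\cdots j_1)_\sigma\), i.e.\ a shift in base \(\sigma\) that prepends the emitted symbol. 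This is the de~Bruijn-graph walk, but with the position bit evolving rigidly.

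I expect this bookkeeping to be the main obstacle, namely checking the wrap-around correction at \(a=\sigma-1\). The first thing I would try is to trace the orbit of row \(1\), which is the first \(0\) in \(L^*\) and maps to row \(0\) (block \(0\), position \(0\), symbol \(\sigma-1\)), and show that it returns to row \(1\) after fewer than \(2\sigma^{k-1}\) steps. If the direct trace is unwieldy, a fallback is to find an invariant preserved by LF, for instance the position bit combined with a digit-sum of \(j\) modulo \(\sigma\) or modulo \(2\), that takes at least two values on the rows; this would split LF into several cycles. Either way \(L^*\) is not a valid cBWT of the single string \(S\), contradicting the assumption. Hence at least one inter-block boundary exists, and \(r_c(S) \geq \sigma^{k-1}+2\).
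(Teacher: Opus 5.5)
Your reduction is the same as the paper's up to the last step. Both use two-row context blocks with symbol sets $\{t-1,t\}$ and one forced boundary inside each block. Then, assuming no inter-block boundary, your forward/backward analysis forces $\mathrm{cBWT}(S)=L^*$.

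The genuine gap is the decisive step: showing that $L^*$ is not the cBWT of a primitive cyclic string. You sketch two strategies but carry out neither, and both are harder than they look. For $\sigma=k=3$, $L^*$ is \texttt{200112200112200112}. Its LF permutation is a single $17$-cycle containing row $1$, plus one fixed point (row $17$). So no digit-sum or parity invariant splits LF the way you hope; the only possible splitting isolates that single row. Tracing the orbit of row $1$ means following it through all $17$ non-fixed rows just to see it miss one row.

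The missing idea is already in your own case analysis. The last block is forward, $(\sigma-2)(\sigma-1)$, so $L^*$ ends with the largest symbol, and that final entry is the last occurrence of $\sigma-1$. Hence
$\mathrm{LF}(n-1)=C[\sigma-1]+\mathrm{rank}(\sigma-1,n-1)=(n-2\sigma^{k-2})+(2\sigma^{k-2}-1)=n-1$, a fixed point. Since $n=2\sigma^{k-1}>1$, LF is not a single cycle. In your base-$\sigma$ picture, block $(\sigma-1)^{k-1}$ at position $1$ emits $\sigma-1$, keeps its position bit, and shifts to itself.

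You should also say why the LF criterion applies: in the exact-length regime every $k$-gram occurs exactly once, so $S$ is primitive. With these two lines your argument becomes the paper's proof.
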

\begin{proof}
    In the sorted rotation matrix, rows whose rotations start with the same context \(u \in \Sigma^{k-1}\) are consecutive. Since \(|S| = 2\sigma^{k-1}\), each context occurs exactly twice, so the cBWT is partitioned into \(\sigma^{k-1}\) blocks of length two. By Proposition~\ref{prop:left_extensions}, the block of a context \(u\) contains the two distinct symbols \((t - 1) \bmod \sigma\) and \(t\), where \(t = u[k-2]\). Thus the blocks contribute \(\sigma^{k-1}\) internal run boundaries. Let \(h\) be the number of additional run boundaries occurring between consecutive context blocks in the linear cBWT. Then \(r_c(S)=1+\sigma^{k-1}+h\).

    We show that \(h\neq0\). Suppose, for contradiction, that \(h=0\). Then, no run boundary occurs between consecutive context blocks. Consecutive contexts in lexicographic order have last symbols \(t\) and \((t+1)\bmod\sigma\). Since \(\sigma\geq3\), the corresponding block contents share exactly one symbol. Therefore, each adjacent pair of blocks must meet at this shared symbol. It follows that each non-final block with last symbol \(t\) ends with \(t\), and each non-first block with last symbol \(t\) begins with \((t-1)\bmod\sigma\). In particular, the final block, whose context is \((\sigma-1)^{k-1}\), begins with \(\sigma-2\) and therefore ends with \(\sigma-1\), so \(\mathrm{cBWT}(S)\) ends with the largest alphabet symbol.

   The string \(S\) is primitive, since otherwise each \(k\)-gram would appear more than once, contradicting the exact-length condition. Hence, by the cBWT/LF characterization recalled above, the induced LF permutation must be a single cycle. However, since the cBWT ends with the largest symbol, its last occurrence maps to the last position of the first column, giving a fixed point. This contradiction shows that \(h \geq 1\), and therefore \(r_c(S) \geq 1 + \sigma^{k-1} + 1 = \sigma^{k-1} + 2\). \qed
\end{proof}

For \(\sigma = 2\), the successor-bispecial condition coincides with the binary de~Bruijn condition, and the corresponding bound follows from the binary de~Bruijn BWT analysis of Higgins~\cite{higgins2012burrows} and Mantaci et al.~\cite{mantaci2017burrows}. 

Together, these bounds identify \(\sigma^{k-1}+2\) as the optimal cyclic run-count target in the exact-length successor-bispecial regime. The construction in the next section attains this value for \(k=3\).

\section{The construction} \label{sec:construction}
We now give the order-\(3\) construction that attains the lower bound of
Theorem~\ref{thm:cyclic_lower_bound}.

We define a family of cyclic strings \(B_\sigma^{(3)}\), indexed by alphabet size \(\sigma \geq 2\). The design principle is that block \(a\) lists all pairs \((a,b)\) to cover every context with first symbol~\(a\). The repeated terminal pair at the end of each block creates an odd-position crossing into the next block, which supplies the second extension \((a+1) \bmod \sigma\) for every context~\(ab\). The string \(B_\sigma^{(3)}\) is over the alphabet \(\Sigma = \{0,1,\ldots,\sigma-1\}\) and is defined as the concatenation of \(\sigma\) \emph{blocks}, one for each symbol in \(0,1,\dotsc,\sigma-1\), in this order. For \(a\in\{0,1,\dotsc,\sigma-2\}\), the \(a\)-th block is constructed by concatenating the pairs \((a,0),(a,1),\dotsc,(a,\sigma-1),(a,\sigma-1)\), where we write \((a,b)\) to denote the \(2\)-gram \(ab\), i.e.~the two-character string with first character \(a\) and second character \(b\). Note that the pair \((a,\sigma-1)\) is repeated at the end of each block for \(a\leq\sigma-2\). For \(a=\sigma-1\), the block consists only of the pair \((\sigma-1,\sigma-1)\).

\begin{example}
    The string \(B^{(3)}_4\): \texttt{00010203031011121313202122232333}.
\end{example}

\begin{remark}
    The string \(B_\sigma^{(3)}\) has length \(2\sigma^2\): each block \(a < \sigma-1\) contributes \(2(\sigma+1)\) characters, and block \(\sigma-1\) contributes \(2\) characters, yielding \((\sigma-1) \cdot 2(\sigma+1) + 2 = 2(\sigma^2-1) + 2 = 2\sigma^2\).
\end{remark}

Recall that \(R_S(u)\) denotes the set of right-extensions of a context \(u\) in the cyclic string \(S\). To prove that \(B_\sigma^{(3)}\) is successor-bispecial at order~\(3\), we must show that every length-\(2\) context has the two prescribed extensions. Thus, for every first symbol \(a\in\Sigma\) and every second symbol \(b\in\Sigma\), we prove
\[
    R_{B_\sigma^{(3)}}(ab)=\{a,(a+1)\bmod\sigma\}.
\]
The role of \(b\) is to range over all contexts whose first symbol is \(a\); the successor rule then says that the two extensions depend only on this first symbol.

\begin{lemma}\label{lemma:extensions}
    For every \(ab \in \Sigma^2\), the \(2\)-gram \(ab\) occurs exactly twice in~\(B^{(3)}_\sigma\), and \(R_{B^{(3)}_\sigma}(ab) = \{a,\, (a{+}1) \bmod \sigma\}\). Consequently, the occurring cyclic \(3\)-grams are exactly the \(abc\) with \(c \in \{a,\, (a{+}1) \bmod \sigma\}\), and there are \(2\sigma^2\) of them.
\end{lemma}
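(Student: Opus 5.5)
We will prove the lemma by directly enumerating the cyclic positions of \(B_\sigma^{(3)}\) and classifying every occurrence of a \(2\)-gram \(ab\) together with the symbol that follows it. Since \(|B_\sigma^{(3)}|=2\sigma^2\) and there are \(\sigma^2\) possible \(2\)-grams, it suffices to exhibit, for each \(ab\), two occurrences followed by \(a\) and by \((a+1)\bmod\sigma\) respectively. A counting argument then gives the rest: there are exactly \(2\sigma^2\) cyclic \(2\)-gram positions, so if every \(ab\) has at least two occurrences, each has exactly two. Hence \(R(ab)\) is exactly the set of the two followers exhibited, and these are distinct whenever \(\sigma\geq2\). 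The final statement about \(3\)-grams follows immediately: there are \(2\sigma^2\) of them, one for each pair (context, extension).

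The enumeration splits the positions by parity. At even positions, the \(2\)-grams are exactly the listed pairs \((a,b)\). Each \(ab\) with \(a\leq\sigma-2\) and \(b\leq\sigma-2\) appears once. The pair \((a,\sigma-1)\) with \(a\leq\sigma-2\) appears twice, and \((\sigma-1,\sigma-1)\) appears once. Pairs \((\sigma-1,b)\) with \(b<\sigma-1\) do not appear at even positions. The follower of an even-position pair \((a,b)\) is the first symbol of the next pair, which is \(a\). There are two exceptions: the second copy of \((a,\sigma-1)\), followed by \(a+1\) (the first symbol of the next block), and the final pair \((\sigma-1,\sigma-1)\), followed cyclically by \(0=(\sigma-1+1)\bmod\sigma\).

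At odd positions, the \(2\)-grams straddle consecutive pairs. Inside block \(a\), consecutive pairs \((a,b),(a,b+1)\) give the \(2\)-gram \(b\,a\) followed by \(b+1\). The transition \((a,\sigma-1),(a,\sigma-1)\) gives \((\sigma-1)a\) followed by \(\sigma-1\). The crossing from block \(a\) to block \(a+1\) gives \((\sigma-1)(a+1)\) followed by \(0\) when \(a+1\leq\sigma-2\). At the final block, we get \((\sigma-1)(\sigma-1)\) followed by \(\sigma-1\), and then the cyclic wrap \((\sigma-1)0\) followed by \(0\).

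We then collect, for each \(ab\), the two occurrences. The cases are: \(a\leq\sigma-2\) with \(b\leq\sigma-2\); \(a\leq\sigma-2\) with \(b=\sigma-1\); and \(a=\sigma-1\). In the last case the occurrences come from odd positions. Reading the odd-position list with roles renamed, the odd \(2\)-gram \(xy\) arising inside block \(y\) (meaning \(x=b\), block index \(y\)) has follower \(x+1\), which is the successor of the first symbol. For example, the \(2\)-gram \((\sigma-1)\,b\) arises from the end of block \(b\) or from a block crossing. The main obstacle is this bookkeeping, especially the odd-position \(2\)-grams starting with \(\sigma-1\) and the wrap-around. We also need to double-check that for \(a\leq\sigma-2\) the odd occurrence of \(a\,b\) inside block \(b\) has follower \(a+1\), which requires \(b\leq\sigma-2\) so that block \(b\) contains the pair \((b,a+1)\). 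The remaining subcases (\(b=\sigma-1\), where the second occurrence is the repeated terminal pair, and \(\sigma=2\)) we handle separately. If the generic odd-position description leaves a gap there, we fall back on the counting argument together with Proposition~\ref{prop:left_extensions}-style reasoning only as a sanity check, and otherwise verify those cases explicitly.
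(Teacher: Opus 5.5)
Your proposal is correct and follows essentially the same route as the paper: split occurrences by parity, exhibit for each \(ab\) one occurrence followed by \(a\) and one followed by \((a+1)\bmod\sigma\), and use the \(2\sigma^2\) position count to rule out further occurrences. Your explicit even/odd position lists already settle the subcases you deferred, so they need no separate treatment. For \(a=b=\sigma-1\), note that one of the two occurrences is even, namely the pair in block \(\sigma-1\) followed cyclically by \(0\). The fallback to Proposition~\ref{prop:left_extensions} is unnecessary, and it would in fact be circular here, since that proposition presupposes successor-bispeciality.
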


\begin{proof}
    Since each block begins at an even position and consists of pairs, every listed pair starts at an even index; we call an occurrence \emph{even} or \emph{odd} according to the parity of its starting position. It is enough to identify two occurrences of each \(2\)-gram: these \(2\sigma^2\) occurrences account for all cyclic positions, so no further occurrences are possible.
    
    \smallskip\noindent
    \textit{Case \(a < \sigma{-}1\), \(b < \sigma{-}1\).}
    The even occurrence is the pair \((a,b)\) in block~\(a\), followed by \((a,b{+}1)\), and has extension~\(a\). The odd occurrence is the pair \((b,a)\) in block~\(b\), followed by \((b,a{+}1)\), and has extension~\(a{+}1\).
    
    \smallskip\noindent
    \textit{Case \(a < \sigma{-}1\), \(b = \sigma{-}1\).}
    The repeated terminal pair \((a,\sigma{-}1)\) in block~\(a\) gives two even occurrences: the first copy is followed by the second and has extension~\(a\), while the second copy is followed by block~\(a{+}1\) and has extension~\(a{+}1\). Block~\(\sigma{-}1\) contains no pair \((\sigma{-}1,a)\), so no odd occurrence exists.
    
    \smallskip\noindent
    \textit{Case \(a = \sigma{-}1\), \(b < \sigma{-}1\).}
    Block~\(\sigma{-}1\) contains only \((\sigma{-}1,\sigma{-}1)\), so no even occurrence exists. The repeated pair \((b,\sigma{-}1)\) in block~\(b\) gives an odd occurrence of \((\sigma{-}1)b\) with extension~\(\sigma{-}1\), while the block boundary \(b{-}1 \to b\), with indices modulo~\(\sigma\), gives an odd occurrence with extension~\(0\).
    
    \smallskip\noindent
    \textit{Case \(a = b = \sigma{-}1\).}
    The even occurrence is the pair \((\sigma{-}1,\sigma{-}1)\) in block~\(\sigma{-}1\), followed cyclically by block~\(0\), and has extension~\(0\). The odd occurrence is at the end of block~\(\sigma{-}2\) and has extension~\(\sigma{-}1\).
    
    \smallskip\noindent
    In every case, \(R_{B^{(3)}_\sigma}(ab) = \{a,\,(a{+}1) \bmod \sigma\}\). \qed
\end{proof}

Thus \(B_\sigma^{(3)}\) is successor-bispecial at order~\(3\), with exact length \(2\sigma^2\).

\begin{remark}
For \(\sigma = 2\), the allowed set is the full set of \(2^3\) binary \(3\)-grams, so \(B_2^{(3)}\) is a de~Bruijn sequence of order~\(3\).
\end{remark}

\section{Measuring the string family} \label{sec:measures}

In this section, we measure the strings \(B_\sigma^{(3)}\). We start by counting the number of runs in the cBWT.

\begin{lemma}\label{lemma:c_runs}
    The cBWT for \(B_\sigma^{(3)}\) has \(\sigma^2 + 2\) runs.
\end{lemma}
\begin{proof}
    Since all \(2\sigma^2\) cyclic \(3\)-grams of \(B_\sigma^{(3)}\) are distinct, the relative order of the cyclic rotations is determined by their first three symbols. We divide the rows into groups with a common prefix \(ab\). By Lemma~\ref{lemma:extensions}, such groups contain the \(3\)-grams \(aba\) and \(ab((a+1)\bmod\sigma)\).

    For a context \(ab\), Proposition~\ref{prop:left_extensions} shows that the corresponding cBWT block contains the two symbols \((b - 1) \bmod \sigma\) and \(b\). It remains to determine their order.
    
    First suppose \(a < \sigma - 1\). Then the two rows in the block are ordered as \(aba\) and \(ab(a + 1)\). The occurrence of \(aba\) starts at the listed pair \((a,b)\) in block~\(a\), so its preceding symbol is \((b - 1) \bmod \sigma\). The occurrence of \(ab(a + 1)\) starts between the pair \((b,a)\) and the following pair \((b,a+1)\) in block~\(b\), so its preceding symbol is \(b\); when \(b = \sigma - 1\), the same two preceding symbols come from the repeated terminal pair \((a,\sigma - 1)\) in block~\(a\). Hence, as \(b\) ranges from \(0\) to \(\sigma - 1\), the contribution of the groups with first context symbol \(a\) is
    \[
        (\sigma - 1)0011\cdots(\sigma - 2)(\sigma - 2)(\sigma - 1).
    \]

    For first context symbol \(\sigma - 1\), all groups except the last contribute as above. The final group \((\sigma - 1)(\sigma - 1)\) has sorted \(3\)-grams
    \[
        (\sigma-1)(\sigma-1)0,\quad
        (\sigma-1)(\sigma-1)(\sigma-1),
    \]
    with preceding characters \(\sigma-1\) and \(\sigma-2\), respectively. Therefore, the contribution of the groups with the first context symbol \(\sigma-1\) is
    \[
        (\sigma-1)0011\cdots
        (\sigma-3)(\sigma-3)(\sigma-2)(\sigma-1)(\sigma-2).
    \]

    Concatenating the contributions of all first context symbols, we obtain
    \[
    \begin{aligned}
    \mathrm{cBWT}(B_\sigma^{(3)})
    ={}&
    (\sigma - 1)
    (0011\cdots(\sigma - 1)(\sigma - 1))^{\sigma - 1} \\
    &\quad
    0011\cdots
    (\sigma - 3)(\sigma - 3)(\sigma - 2)(\sigma - 1)(\sigma - 2).
    \end{aligned}
    \]
    Here and below, empty ranges are omitted. This gives one leading run, \(\sigma\) runs for each first context symbol \(a < \sigma - 1\), and \(\sigma + 1\) runs for the final first context symbol. Hence \(r_c(B_\sigma^{(3)}) = 1 + \sigma(\sigma - 1) + (\sigma + 1) = \sigma^2 + 2\). \qed
\end{proof}

For \(\sigma\geq3\), this matches the lower bound of Theorem~\ref{thm:cyclic_lower_bound} with \(k=3\), so the cyclic run count of \(B_\sigma^{(3)}\) is optimal.

\begin{example}\label{ex:cbwt}
    For \(\sigma = 3\),
    \(B_3^{(3)} = \texttt{000102021011121222}\) has
    \[
        \mathrm{cBWT}(B_3^{(3)})
        =
        \texttt{200112200112200121},
    \]
    which has \(\sigma^2 + 2 = 11\) runs.
\end{example}

\begin{lemma}\label{lemma:bwt_runs}
    Let \(t = \ell_3(R)\), where \(R\) is the rotation of \(B_\sigma^{(3)}\) starting with \((\sigma - 1)^3\). Then \(\mathrm{BWT}(t\$)\) has \(\sigma^2 + 2\) runs.
\end{lemma}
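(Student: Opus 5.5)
The plan is to compare \(\mathrm{BWT}(t\$)\) row by row with \(\mathrm{cBWT}(B_\sigma^{(3)})\), whose explicit form was obtained in Lemma~\ref{lemma:c_runs}. Let \(n=2\sigma^2\). First I locate \(R\). Block \(\sigma-2\) ends with the repeated pair \((\sigma-2)(\sigma-1)\), and block \(\sigma-1\) is \((\sigma-1)(\sigma-1)\). Hence \(B_\sigma^{(3)}\) ends with \(\cdots(\sigma-2)(\sigma-1)(\sigma-1)(\sigma-1)\), and \((\sigma-1)^3\) occurs exactly once cyclically, by Lemma~\ref{lemma:extensions}. So \(R=(\sigma-1)^3\,0\,0\,0\,1\cdots(\sigma-2)(\sigma-1)(\sigma-2)\), and \(t=R\cdot(\sigma-1)(\sigma-1)\) has length \(n+2\). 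The string \(t\$\) then has \(n+3\) suffixes.

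I split these suffixes into two classes. The first class consists of the suffixes starting at positions \(0,\dots,n-1\). Each begins with a full \(3\)-gram inside \(t\), namely the cyclic \(3\)-gram of \(R\) at that position. Since these \(3\)-grams are pairwise distinct (Lemma~\ref{lemma:extensions}), these suffixes are ordered exactly as the cyclic rotations. Their preceding symbols agree with the cyclic ones, with one exception: position \(0\), the row \((\sigma-1)^3\cdots\), which is preceded by \(\$\) instead of \(\sigma-2\). This row is the last row of the cBWT. The second class consists of the three extra suffixes \(\$\), \((\sigma-1)\$\) and \((\sigma-1)(\sigma-1)\$\), whose preceding symbols are \(\sigma-1\), \(\sigma-1\) and \(t[n-1]=\sigma-2\).

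Next I place the extra suffixes using \(\$<a\) for all \(a\in\Sigma\).
\begin{itemize}
\item \(\$\) is the first row. Its symbol \(\sigma-1\) merges with the leading \((\sigma-1)\) of the cBWT.
\item \((\sigma-1)\$\) sorts first among the rows beginning with \(\sigma-1\), immediately before the group \((\sigma-1)0\). By the block description in Lemma~\ref{lemma:c_runs}, that group begins with symbol \(\sigma-1\), preceded by the final \((\sigma-1)\) of the \(a=\sigma-2\) contribution. So the inserted \(\sigma-1\) creates no new boundary.
\item \((\sigma-1)(\sigma-1)\$\) sorts first in the final context group, before \((\sigma-1)(\sigma-1)0\).
\end{itemize}
The tail of the cBWT, \(\cdots(\sigma-3)\,(\sigma-2)\,(\sigma-1)\,(\sigma-2)\), therefore becomes \(\cdots(\sigma-3)\,(\sigma-2)\,(\sigma-2)\,(\sigma-1)\,\$\). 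Counting from the symbol \(\sigma-2\) of block \((\sigma-1)(\sigma-2)\) onward, both versions have exactly three runs, and everything before it is unchanged. Hence the run count stays \(\sigma^2+2\).

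The main obstacle is bookkeeping rather than any idea. I must check that the insertions land exactly at the claimed places, i.e., that the \(\$\)-terminated suffixes compare correctly against all \(3\)-gram prefixes. I must also check that merges at block boundaries happen as claimed, which relies on the exact cBWT string from Lemma~\ref{lemma:c_runs}. Small alphabets need separate checks, since ``empty ranges are omitted''. For \(\sigma=2\), \(\sigma-3\) does not exist, so I would verify the tail directly. For \(\sigma=2\) I would also write out \(t=1110001011\) and its BWT explicitly as a sanity check that the count is \(6\).
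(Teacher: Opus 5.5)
Your proposal is correct and follows the paper's proof essentially step by step. The rows starting inside \(R\) keep their cyclic order and symbols, except the row \((\sigma-1)^3\cdots\), which now carries \(\$\). The three extra rows \(\$\), \((\sigma-1)\$\), \((\sigma-1)(\sigma-1)\$\) insert \(\sigma-1\), \(\sigma-1\), \(\sigma-2\): the first two merge into existing runs, and the third turns the tail \((\sigma-2)(\sigma-1)(\sigma-2)\) into \((\sigma-2)(\sigma-2)(\sigma-1)\$\), so the count stays \(\sigma^2+2\). Your placement of the second inserted \(\sigma-1\) (between the last symbol of the \(a=\sigma-2\) contribution and the first symbol of group \((\sigma-1)0\)) is slightly more precise than the paper's display. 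The general argument also already covers \(\sigma=2\), so the separate check is only a sanity test.
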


\begin{proof}
    Since \(R\) starts with \((\sigma - 1)^3\), we have \(t=R(\sigma - 1)(\sigma - 1)\).

    The rows of the rotation matrix of \(t\$\) whose starts lie inside \(R\) have the same relative order as the corresponding cyclic rotations, since their first three symbols are distinct. Thus they reproduce the cBWT pattern from Lemma~\ref{lemma:c_runs}, except that the row \(t\$\), whose prefix is \((\sigma - 1)^3\), has BWT character \(\$\) instead of the final cyclic character \(\sigma - 2\).

    The additional rows begin with \(\$\), \((\sigma - 1)\$\), and \((\sigma - 1)(\sigma - 1)\$\). They are inserted before the rows with prefixes \(00\), \((\sigma - 1)0\), and \((\sigma - 1)(\sigma - 1)0\), respectively; their BWT characters are \(\sigma - 1\), \(\sigma - 1\), and \(\sigma - 2\).
    \[
        \begin{aligned}
        \mathrm{BWT}(t\$)
        ={}&
        \boldsymbol{(\sigma - 1)}(\sigma - 1)
        (0011\cdots(\sigma - 1)(\sigma - 1))^{\sigma - 1}\\
        &\quad
        \boldsymbol{(\sigma - 1)}0011\cdots
        (\sigma - 3)(\sigma - 3)
        (\sigma - 2)\boldsymbol{(\sigma - 2)}
        (\sigma - 1)\boldsymbol{\$}.
        \end{aligned}
    \]
    The three inserted characters and the sentinel replacing the final cyclic \(\sigma-2\) are shown in bold. This string has one initial \((\sigma - 1)\) run, \(\sigma\) runs for each of the \(\sigma - 1\) regular blocks, and \(\sigma + 1\) runs in the final block. Hence
    \(r(t)=1+\sigma(\sigma - 1)+(\sigma + 1)=\sigma^2 + 2.\)\qed
\end{proof}

\begin{lemma}\label{lemma:smallest_set}
Let \(t=\ell_3(R)\) for any rotation \(R\) of \(B_\sigma^{(3)}\). Then \(\chi(t)=2\sigma^2+1\).
\end{lemma}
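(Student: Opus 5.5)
We would use the characterization \(\chi(t)=|S_r(t)|\) of Navarro et al.\ recalled in Section~\ref{sec:prelim} and count super-maximal right-extensions directly. The plan is to show that every element of \(S_r(t)\) of length at least three ends either in a cyclic \(3\)-gram of \(B_\sigma^{(3)}\) or in the sentinel. We will also show that each of these \(2\sigma^2+1\) possible endings is the ending of exactly one element of \(S_r(t)\). Throughout, the rotation \(R\) is arbitrary.

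\emph{Structure of \(t\$\).} Since \(t=R\cdot R[0..1]\) has length \(2\sigma^2+2\), its \(3\)-grams at starting positions \(0,\dots,2\sigma^2-1\) are exactly the cyclic \(3\)-grams of \(R\). By Lemma~\ref{lemma:extensions}, these are the \(2\sigma^2\) allowed \(3\)-grams, each occurring exactly once in \(t\). The only other length-\(3\) windows of \(t\$\) contain \(\$\). Every context \(ab\) occurs twice cyclically, with the two extensions \(a\) and \((a+1)\bmod\sigma\), so \(ab\) is right-maximal in \(t\$\). Hence each allowed \(3\)-gram \(abc\) lies in \(E_r(t)\). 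Also \(\epsilon\) is right-maximal, so \(\$\in E_r(t)\).

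\emph{Chains by ending.} Fix an allowed \(3\)-gram \(abc\). Since it occurs only once in \(t\$\), every substring ending in \(abc\) is a suffix of the single prefix of \(t\) ending at that occurrence. Hence the elements of \(E_r(t)\) ending in \(abc\) are totally ordered by the suffix relation. Similarly, every element ending in \(\$\) is a suffix of \(t\$\), so these also form a chain. Each chain is nonempty: the first contains \(abc\) and the second contains \(\$\). So each chain has a unique longest element. Such an element cannot be a proper suffix of any element with a different ending, so it is super-maximal. This yields \(2\sigma^2+1\) distinct elements of \(S_r(t)\).

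\emph{No other super-maximal elements.} It remains to check that every element of \(S_r(t)\) is one of these. Take \(xa\in E_r(t)\) with \(a\neq\$\) and \(|xa|\le 2\). We exhibit a longer element of \(E_r(t)\) having \(xa\) as a proper suffix. If \(|xa|=2\), write \(xa=ba\); then \(cb\) is right-maximal for any \(c\) with \(b\in\{c,c+1\}\), and \(cba\) is a \(3\)-gram in \(E_r(t)\) provided \(a\in\{c,c+1\}\) as well. Choosing \(c\) this way requires care. A cleaner route uses that the elements ending in a given string \(ba\) of length two are suffixes of the occurring \(3\)-grams \(\ast ba\). Every occurring \(2\)-gram \(ba\) is preceded by some symbol \(c\) inside \(t\), and that \(cba\) is an allowed \(3\)-gram, hence in \(E_r(t)\). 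For \(|xa|=1\), the same argument applies with \(a\) being the end of some occurring \(2\)-gram, which is in \(E_r(t)\) because \(\epsilon\)'s extensions and right-maximality of all single symbols give every occurring \(2\)-gram.

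This short-element step is the one we expect to need the most care. The worry is an occurrence adjacent to position \(0\) or to the appended tail, where a short element might have no left context inside \(t\). We would handle it by noting that every symbol and every \(2\)-gram also occurs cyclically in the interior, since each \(2\)-gram occurs twice. With that, \(|S_r(t)|=2\sigma^2+1\), and so \(\chi(t)=2\sigma^2+1\).
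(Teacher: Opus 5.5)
Your proof is correct, but it counts \(S_r(t)\) differently from the paper.

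\textbf{The paper's route.} It first observes that, since every \(3\)-gram occurs exactly once in \(t\), no string of length at least \(3\) is right-maximal. This identifies \(S_r(t)\) explicitly: it consists of the \(2\sigma^2\) allowed \(3\)-grams together with \(p\$\). Here \(p=R[0..1]\) is right-maximal because it occurs both at the start of \(t\) and immediately before the sentinel.

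\textbf{Your route.} You never identify the super-maximal elements. Instead you group the elements of \(E_r(t)\) of length at least \(3\) by their last three symbols, or by the sentinel. Unique occurrence forces each group to be a suffix chain, so each chain contributes exactly its longest element to \(S_r(t)\).

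\textbf{Comparison.} Your argument needs only two facts: each allowed \(3\)-gram occurs once in \(t\), and it lies in \(E_r(t)\). It never has to show that longer strings fail to be right-maximal, nor locate the sentinel maximum. The paper's argument costs one extra observation but yields an explicit smallest suffixient set. Putting the two together, each of your \(3\)-gram chains is just the singleton \(\{abc\}\), and the sentinel chain has maximum \(p\$\).

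\textbf{A simplification.} In your short-element step, the first attempt with the condition \(b\in\{c,c+1\}\) is unnecessary, since every context is right-maximal. Simply take \(c=a\). The \(3\)-gram \(aba\) is always allowed, so both \(ba\) and the single symbol \(a\) are proper suffixes of \(aba\in E_r(t)\). This removes any worry about occurrences near the ends of \(t\).
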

\begin{proof}
    By \cite{navarro2025smallest}, \(\chi(t)=|S_r(t)|\), so it suffices to count the super-maximal right-extensions of \(t\$\).
    By Lemma~\ref{lemma:extensions}, the cyclic \(3\)-grams of \(B_\sigma^{(3)}\), and hence the \(3\)-grams occurring in \(t\), are exactly \(\{\,ab c : ab\in\Sigma^2,\ c\in\{a,(a+1)\bmod\sigma\}\,\}\).
    
    There are \(2\sigma^2\) such \(3\)-grams.
    Moreover, Lemma~\ref{lemma:extensions} shows that each \(2\)-gram occurs exactly twice and realizes the two distinct extensions, so each \(3\)-gram occurs exactly once. Hence, any string of length at least \(3\) occurs at most once, because its length-\(3\) prefix occurs only once. Therefore, no string of length at least \(3\) is right-maximal, and no non-sentinel super-maximal right-extension has length at least \(4\).

    Conversely, every occurring \(3\)-gram has a length-\(2\) prefix with the two extensions given by Lemma~\ref{lemma:extensions}, so it belongs to \(E_r(t)\). Since no longer non-sentinel element of \(E_r(t)\) exists, these \(2\sigma^2\) \(3\)-grams are precisely the super-maximal right-extensions of \(t\$\) not involving the sentinel. It remains to account for right-extensions ending in \(\$\). Let \(p\) be the length-\(2\) suffix of \(t\) immediately preceding the sentinel. The same \(2\)-gram \(p\) occurs at the beginning of \(t\), followed by a symbol of \(\Sigma\), and at the end of \(t\), followed by \(\$\). Hence \(p\) is right-maximal in \(t\$\), and \(p\$\in E_r(t)\). If \(x\$\in E_r(t)\) with \(|x|\geq3\), then \(x\) occurs only as the terminal suffix of \(t\), because every \(3\)-gram occurs once; hence \(x\) is not right-maximal. Any shorter sentinel-ending right-extension is a suffix of \(p\$\).

    Hence \(|S_r(t)|=2\sigma^2+1\), and \(\chi(t)=2\sigma^2+1\). \qed
\end{proof}

\begin{theorem}\label{theorem:ratio}
    Let \(R\) be the rotation of \(B_\sigma^{(3)}\) that starts with \((\sigma -1)^3\), and let \(t=\ell_3(R)\). Then 
    \[
        \frac{\chi(t)}{r(t)}
        =
        \frac{2\sigma^2+1}{\sigma^2+2}.
    \]
    In particular, this ratio approaches \(2\) as \(\sigma\to\infty\).
\end{theorem}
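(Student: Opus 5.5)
The theorem follows by combining the two preceding lemmas, so the plan is mostly bookkeeping. I would first note that the rotation \(R\) in the statement is the same one used in Lemma~\ref{lemma:bwt_runs}. That is, \(R\) is the rotation of \(B_\sigma^{(3)}\) beginning with \((\sigma-1)^3\). It exists because, by Lemma~\ref{lemma:extensions}, the \(3\)-gram \((\sigma-1)(\sigma-1)(\sigma-1)\) is an allowed \(3\)-gram (its extension \(\sigma-1\) equals the first symbol of its context), and it occurs exactly once. Concretely, it is formed by the odd occurrence at the end of block~\(\sigma-2\) together with the pair in block~\(\sigma-1\). So \(R\) is well defined and unique.

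Next I would invoke Lemma~\ref{lemma:bwt_runs} to get \(r(t)=\sigma^2+2\). Since Lemma~\ref{lemma:smallest_set} holds for \(t=\ell_3(R)\) with \emph{any} rotation \(R\), it applies in particular to this one and gives \(\chi(t)=2\sigma^2+1\). Dividing yields
\[
\frac{\chi(t)}{r(t)}=\frac{2\sigma^2+1}{\sigma^2+2}.
\]

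For the limit I would rewrite the ratio as \(2-3/(\sigma^2+2)\), which tends to \(2\) as \(\sigma\to\infty\). This also shows the gap is \(O(1/\sigma^2)\), as claimed in the introduction.

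There is no real obstacle here, since the substantive work lives in Lemmas~\ref{lemma:bwt_runs} and~\ref{lemma:smallest_set}. The only point needing care is that both lemmas refer to the same string \(t\). That consistency is guaranteed by the uniqueness of the rotation starting with \((\sigma-1)^3\) and by the rotation-independence of Lemma~\ref{lemma:smallest_set}.
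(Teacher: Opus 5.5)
Your proposal is correct and matches the paper's proof, which simply combines Lemma~\ref{lemma:bwt_runs} (giving \(r(t)=\sigma^2+2\)) with Lemma~\ref{lemma:smallest_set} (giving \(\chi(t)=2\sigma^2+1\) for any rotation). Your extra check that the rotation starting with \((\sigma-1)^3\) exists and is unique is correct: this \(3\)-gram is formed by the last symbol of block~\(\sigma-2\) followed by block~\(\sigma-1\). The paper takes that point for granted.
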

\begin{proof}
    This follows immediately from Lemmas~\ref{lemma:bwt_runs} and~\ref{lemma:smallest_set}. \qed
\end{proof}

For every \(\sigma \geq 2\), our construction improves on the clustered
alphabet-growing construction of~\cite{date2025neartightness}, whose ratio
is \(2\sigma/(\sigma + 1)\).

\section{Algebraic interpretation of the paired cBWT blocks}\label{sec:lfsr}

We now give an algebraic interpretation of the optimal paired cBWT block pattern of Section~\ref{sec:measures}. BWT-based views of de~Bruijn words have also been studied~\cite{higgins2012burrows,fici2025generalized}. In our previous work~\cite{date2025neartightness}, the binary construction was tied directly to the cBWT structure of a linear-feedback shift register (LFSR) de~Bruijn sequence. Here we show that an analogous phenomenon persists over finite fields: under a hypothesis on primitive trinomials, normalized LFSR de~Bruijn sequences have affine cBWT blocks, and a two-row selection extracts the order-\(3\) paired pattern.

Let \(q\) be a prime power; throughout this section, the alphabet is \(\mathbb F_q\), so \(\sigma = q\). Assume that there exists \(c \in \mathbb F_q\) such that \(x^3 - x + c\) is primitive over \(\mathbb F_q\), and fix one such \(c\)~\cite{lidl1997finite}. Fix an alphabet order \(\beta_0 < \cdots < \beta_{q - 1}\) on \(\mathbb F_q\), write \(\lambda = \beta_{q - 1}\), and let \(\operatorname{pred}(\beta_i) = \beta_{(i - 1) \bmod q}\) be the cyclic predecessor in this order. All arithmetic below is in \(\mathbb F_q\).
The associated recurrence \(s_{n+3} = s_{n+1} - cs_n\) has period \(q^3 - 1\) on the nonzero states, where the state at time \(n\) is the triple \((s_n, s_{n+1}, s_{n+2})\).

We also set \(\eta = \lambda - \operatorname{pred}(\lambda)\). Since \(\eta \neq 0\), the state \((0, 0, \eta)\) is nonzero and therefore appears on the cycle, so the factor \(00\eta\) occurs. To obtain a de~Bruijn sequence, we insert the missing zero state at this occurrence, replacing the local factor \(00\eta\) by \(000\eta\). After this insertion, define \(d_i=\lambda-s_{-i}\), and let \(D_q\) be the resulting cyclic sequence. This reversal and complementation normalizes the sequence so that, away from the insertion seam, the predecessor of a row with prefix \((x,y,z)\) is independent of \(x\).

\begin{lemma}\label{lem:lfsr_predecessor}
Let \(i\) be a position of \(D_q\) with cyclic indices, such that none of the positions \(i-1, \ldots, i+2 \) lies in the inserted occurrence of the zero state. Write \(D_q[i..i + 2] = (x,y,z)\). Then \(D_q[i - 1] = P(x,y,z) = y + c(\lambda - z)\). In particular, this predecessor is independent of \(x\).
\end{lemma}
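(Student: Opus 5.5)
Write \(d_j = \lambda - s_{-j}\), so that \(s_{-j} = \lambda - d_j\). Away from the inserted zero, \(D_q\) is just the complemented reversal of the LFSR sequence \(s\), and the recurrence \(s_{n+3} = s_{n+1} - c s_n\) holds between consecutive \(s\)-values. The plan is to translate the recurrence through this reversal and complementation, and then read off the symbol at position \(i-1\).

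First I fix the bookkeeping around the seam. After the insertion, the indexing of \(D_q\) shifts by one on one side of the inserted zero. The hypothesis that none of \(i-1,\ldots,i+2\) lies in the inserted occurrence of the zero state guarantees that the four consecutive symbols \(D_q[i-1..i+2]\) all come from one window of four consecutive terms of the original sequence \(s\). I will read "the inserted occurrence" as the four-symbol window \(000\eta\) that replaced \(00\eta\). Any window of four positions avoiding it then lies within one stretch of \(s\) where the indexing is a fixed translate. So there is an index \(m\) with
\[
D_q[i-1] = \lambda - s_{m+3},\quad D_q[i] = \lambda - s_{m+2},\quad D_q[i+1] = \lambda - s_{m+1},\quad D_q[i+2] = \lambda - s_m .
\]
The reversal turns increasing positions in \(D_q\) into decreasing indices in \(s\). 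Checking this indexing is the only delicate point, and I expect it to be the main, though mild, obstacle: one must verify that the fixed translate is really the same for all four positions.

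Next I apply the recurrence to \(s_{m+3}\). With \(x = \lambda - s_{m+2}\), \(y = \lambda - s_{m+1}\) and \(z = \lambda - s_m\), the recurrence gives \(s_{m+3} = s_{m+1} - c s_m = (\lambda - y) - c(\lambda - z)\). Therefore
\[
D_q[i-1] = \lambda - s_{m+3} = y + c(\lambda - z),
\]
which is exactly \(P(x,y,z)\). The coefficient of \(s_{m+2}\) in the recurrence is \(0\), because the trinomial \(x^3 - x + c\) has no \(x^2\) term. Consequently \(x\) does not appear in the expression, which proves the independence claim.

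If the sign conventions of the reversal come out differently, so that \(D_q[i-1]\) corresponds to \(s_{m-1}\) instead, I would instead solve the recurrence backwards: \(c\,s_{m-1} = s_m - s_{m+2}\). In that case the orientation chosen in the definition \(d_i = \lambda - s_{-i}\) must be the one matching the stated formula. So the first step is to confirm that orientation, and after that the computation is one line.
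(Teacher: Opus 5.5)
Your proof is correct and is essentially the paper's argument: with \(m=-i-2\), your four identifications are exactly \(d_j=\lambda-s_{-j}\) at \(j=i-1,\dots,i+2\), and applying \(s_{m+3}=s_{m+1}-cs_m\) is the paper's step of using the recurrence at index \(-i-2\) to get \(d_{i-1}=\lambda-s_{-i+1}=y+c(\lambda-z)\). Your closing contingency is unnecessary, since \(d_j=\lambda-s_{-j}\) already fixes this orientation (the backward alternative would not give the stated formula anyway), and your explicit seam bookkeeping is a slightly more careful version of the paper's ``away from the insertion seam''.
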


\begin{proof}
Since \(d_j = \lambda - s_{-j}\), we have \(s_{-i-1} = \lambda - y\) and \(s_{-i-2} = \lambda - z\). Away from the insertion seam, the recurrence applies at index \(-i-2\), giving \(s_{-i+1} = s_{-i-1} - cs_{-i-2}\).

Therefore, the preceding symbol is
\[
    d_{i-1}
    =
    \lambda - s_{-i+1}
    =
    \lambda - \bigl((\lambda - y) - c(\lambda - z)\bigr)
    =
    y + c(\lambda - z). \tag*{\qed}
\]
\end{proof}

Thus, away from the insertion seam, inside every cBWT block whose rows have a common prefix \((a,b)\), the row indexed by the third symbol \(d\) contributes \(b+c(\lambda-d)\) to the cBWT. Hence, each unaffected block is affine in \(d\) and independent of \(a\).

\begin{proposition}\label{prop:lfsr_mask}
For each \(b\in\mathbb F_q\), define \(
    M_b=\left\{\lambda,\;\lambda-c^{-1}\bigl(\operatorname{pred}(b)-b\bigr)\right\},
\)
where \(\operatorname{pred}(b)\) is taken in the fixed alphabet order, while subtraction and inversion are field operations. Then, for every prefix block \((a,b)\) away from the zero-state insertion seam, selecting the rows with \(d\in M_b\) gives exactly the two cBWT symbols \(\operatorname{pred}(b),b\), in this order.
\end{proposition}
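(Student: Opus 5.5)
The plan is to read Proposition~\ref{prop:lfsr_mask} off Lemma~\ref{lem:lfsr_predecessor}. Fix a prefix block \((a,b)\) away from the insertion seam. Since \(D_q\) is de~Bruijn of order~\(3\) over \(\mathbb F_q\), every triple \((a,b,d)\) occurs exactly once. So the block consists of \(q\) rows indexed by \(d\in\mathbb F_q\), sorted by \(d\) in the fixed alphabet order. By Lemma~\ref{lem:lfsr_predecessor}, the row indexed by \(d\) contributes \(P(a,b,d)=b+c(\lambda-d)\). Selecting a subset \(M_b\) of \(d\)-values therefore yields the symbols \(b+c(\lambda-d)\) for \(d\in M_b\), listed in the alphabet order of \(d\).

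The two elements of \(M_b\) are handled separately. For \(d=\lambda\) we get \(b+c\cdot 0=b\). For \(d=\lambda-c^{-1}(\operatorname{pred}(b)-b)\) we have \(\lambda-d=c^{-1}(\operatorname{pred}(b)-b)\), so the contribution is \(b+(\operatorname{pred}(b)-b)=\operatorname{pred}(b)\). We need \(c\neq0\), which holds because \(x^3-x+c\) is primitive, hence irreducible, so \(0\) is not a root. The two selected values are distinct: \(\operatorname{pred}(b)\neq b\) since \(q\geq2\), hence \(c^{-1}(\operatorname{pred}(b)-b)\neq0\). The selection thus has exactly two rows.

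For the order, \(\lambda=\beta_{q-1}\) is the largest symbol, so the row with \(d=\lambda\) is the last row of the block. The other selected row has a strictly smaller third symbol and comes earlier. Reading the selected rows in sorted order gives \(\operatorname{pred}(b)\) followed by \(b\), as claimed.

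The main point needing care is what ``away from the seam'' means. Lemma~\ref{lem:lfsr_predecessor} requires that positions \(i-1,\dots,i+2\) avoid the inserted zero state. I will interpret ``block away from the seam'' as the hypothesis that this condition holds for both selected rows of the block, or, more simply, for all rows of the block. Under that hypothesis the argument above applies verbatim. I would add a remark that only the boundedly many blocks whose occurrences touch the factor \(000\eta\) and its reversed-complemented image are excluded.
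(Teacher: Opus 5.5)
Your proposal is correct and follows essentially the same route as the paper. Both arguments use Lemma~\ref{lem:lfsr_predecessor} to get the affine contribution \(b+c(\lambda-d)\), identify the two elements of \(M_b\) as the rows producing \(b\) and \(\operatorname{pred}(b)\), and use that \(\lambda\) is the largest symbol to fix the order. The only differences are minor: the paper solves the two equations for \(d\) where you substitute directly, and you spell out that \(c\neq0\) (an irreducible cubic cannot have \(0\) as a root) and that each triple \((a,b,d)\) occurs exactly once, details the paper leaves implicit.
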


\begin{proof}
Fix a prefix block \((a,b)\), and let \(d\) be the third coordinate of a row, so the row prefix is \((a,b,d)\). The cBWT symbol of this row is its preceding symbol; by Lemma~\ref{lem:lfsr_predecessor}, this is
\(P(a,b,d)=b+c(\lambda-d)\).

We select the rows producing the two paired symbols \(\operatorname{pred}(b)\) and \(b\), which are the symbols of the order-\(3\) block pattern from Lemma~\ref{lemma:c_runs}. The equation \(b+c(\lambda-d)=b\) gives \(d=\lambda\), while \(b+c(\lambda-d)=\operatorname{pred}(b)\) gives \(d=\lambda-c^{-1}(\operatorname{pred}(b)-b)\). These are exactly the two
elements of \(M_b\).

Since \(\operatorname{pred}(b)\neq b\), the second selected value is distinct from \(\lambda\). Since the first three symbols distinguish the rows, within a fixed prefix block \((a,b)\), rows are ordered by the third coordinate \(d\), and \(\lambda\) is the largest symbol in the chosen alphabet order. Hence, the row producing \(\operatorname{pred}(b)\) appears before the row producing \(b\). Thus, the selected rows contribute \(\operatorname{pred}(b),b\), in this order. \qed
\end{proof}

As the prefix blocks are read in lexicographic order, the selected rows give the paired pattern
\[
    \beta_{q-1}
    (\beta_0\beta_0\beta_1\beta_1\cdots\beta_{q-1}\beta_{q-1})^{q-1}
    \beta_0\beta_0\beta_1\beta_1\cdots
    \beta_{q-3}\beta_{q-3}\beta_{q-2}\beta_{q-1}\beta_{q-2}.
\]
This is the paired block structure underlying the cBWT of \(B_q^{(3)}\). When \(q\) is prime and the alphabet is ordered as \(0<1<\cdots<q-1\), this reduces to the shorthand \(\lambda=q-1\) and \(\operatorname{pred}(b)=(b-1)\bmod q\).

The predecessor formula and Proposition~\ref{prop:lfsr_mask} hold away from the seam introduced by the zero-state insertion. At the seam, the insertion replaces the local factor \(00\eta\) by \(000\eta\), so only the two cBWT rows adjacent to the inserted zero state can differ from the affine formula. Under the normalization \(d_i=\lambda-s_{-i}\), the inserted zero state becomes \(\lambda\lambda\lambda\), so the only affected prefix block is \((\lambda,\lambda)\). Moreover, since \(\eta=\lambda-\operatorname{pred}(\lambda)\), the two seam rows have third coordinates
\[
    d=\lambda
    \quad\text{and}\quad
    d=\lambda-c^{-1}(\operatorname{pred}(\lambda)-\lambda),
\]
which are exactly the two values selected by \(M_\lambda\). Thus every selected block contributes \(\operatorname{pred}(b),b\), except the final block \((\lambda,\lambda)\), which contributes \(\lambda,\operatorname{pred}(\lambda)\).

Consequently, after the two-row selection and the single seam correction, the selected LFSR cBWT rows form exactly the boundary-merged paired pattern of \(\mathrm{cBWT}(B_q^{(3)})\) from Lemma~\ref{lemma:c_runs}. Equivalently, \(L_{q,3}\), the boundary-merged candidate of Section~\ref{sec:higher_open} evaluated at \(k=3\), coincides with this cBWT pattern; the optimal order-\(3\) last column therefore has both an explicit combinatorial realization and an LFSR origin.

\section{Higher-order optimal candidates and LF obstructions}\label{sec:higher_open}

Theorem~\ref{thm:cyclic_lower_bound} identifies \(\sigma^{k-1}+2\) as the minimum possible cyclic run count in the exact-length successor-bispecial regime, for \(\sigma \geq 3\). We now ask when this lower bound can be attained. The natural candidate is the boundary-merged last column \(L_{\sigma,k}\), obtained by arranging the two-row context blocks so that only one inter-block boundary creates an additional run boundary. It has exactly the lower-bound number of runs; the remaining question is whether the induced last-to-first (LF) permutation is a single cycle; we call this condition \emph{LF-validity}.

For \(\sigma \geq 2\) and \(k \geq 3\), define the \emph{canonical boundary-merged candidate}
\[
\begin{aligned}
    L_{\sigma,k}
    ={}&
    (\sigma{-}1)
    (0\,0\,1\,1\,\cdots\,(\sigma{-}1)(\sigma{-}1))^{\sigma^{k-2}-1} \\
    &\quad
    0\,0\,1\,1\,\cdots\,
    (\sigma{-}3)(\sigma{-}3)\,
    (\sigma{-}2)(\sigma{-}1)(\sigma{-}2).
\end{aligned}
\]
It has length \(2\sigma^{k-1}\), and empty ranges are omitted. For \(\sigma=2\), it reduces to the extremal pattern of~\cite[Theorem~3]{mantaci2017burrows}. As a linear string, \(L_{\sigma,k}\) has \(\sigma^{k-1}+2\) runs. By Theorem~\ref{thm:cyclic_lower_bound}, this run count is best possible for \(\sigma \geq 3\). Thus, whenever \(L_{\sigma,k}\) is the cBWT of a successor-bispecial cyclic string, it realizes the minimum possible cyclic run count.

We now apply the LF-validity test to this candidate. Recall that LF maps a position \(i\) in a last column \(L\) to the position of the same symbol occurrence in the first column. If \(c=L[i]\), then
\[
    \mathrm{LF}(i) = C[c] + \mathrm{rank}(c,i),
\]
where \(C[c]\) is the number of symbols in \(L\) smaller than \(c\), and \(\mathrm{rank}(c,i)\) counts occurrences of \(c\) strictly before position \(i\).

For \(L_{\sigma,4}\), each symbol occurs \(2\sigma^2\) times. Since the alphabet is ordered \(0 < 1 < \cdots < \sigma - 1\), we have \(C[c] = c \cdot 2\sigma^2\). Therefore, with zero-based positions,
\[
    \mathrm{LF}(i)
    =
    L_{\sigma,4}[i]\cdot 2\sigma^2
    +
    \mathrm{rank}(L_{\sigma,4}[i],i).
\]

\begin{proposition}\label{prop:k4}
    For every \(\sigma \geq 3\), the candidate \(L_{\sigma,4}\) is not a valid cBWT.
\end{proposition}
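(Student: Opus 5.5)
The plan is to invoke the cBWT/LF characterization recalled before Theorem~\ref{thm:cyclic_lower_bound}: \(L_{\sigma,4}\) is a valid cBWT only if its LF permutation is a single cycle of length \(2\sigma^3\). It therefore suffices to exhibit, for every \(\sigma\geq3\), a cycle of LF of length smaller than \(2\sigma^3\). Since \(L_{\sigma,4}\) has a rigid periodic form, I would first derive a closed form for LF on the regular part of the string. I would then identify a closed orbit inside that part and check that it never reaches the irregular positions.

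First, I would parametrize positions. Every position \(i\) with \(1\leq i\leq 2\sigma(\sigma^2-1)\) can be written as \(i=1+2\sigma j+2c+e\) with \(0\leq j\leq\sigma^2-2\), \(c\in\Sigma\) and \(e\in\{0,1\}\), and then \(L_{\sigma,4}[i]=c\). Counting earlier occurrences gives \(\mathrm{rank}(c,i)=2j+e\), plus one extra when \(c=\sigma-1\) because of the leading symbol. Using \(C[c]=2\sigma^2c\), this yields
\[
\mathrm{LF}(i)=2\sigma^2c+2j+e+[c=\sigma-1].
\]
Writing \(j=\sigma j_1+j_0\), the map sends the index with digits \((j_1,j_0,c;e)\), namely \(i-1=2(\sigma^2j_1+\sigma j_0+c)+e\), to an index whose digits are a cyclic rotation \((c,j_1,j_0;e)\), up to an additive offset of \(-1\) (or \(0\) when \(c=\sigma-1\)). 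Two features make this useful. The bit \(e\) is preserved, and the base-\(\sigma\) digits are rotated. A pure digit rotation has many short orbits, such as those of constant digit strings and of period-\(3\) digit patterns. The offset only perturbs the low digits, which suggests that LF decomposes into many cycles rather than one.

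Second, I would turn this heuristic into an explicit short cycle. I would track orbits starting from near-constant digit patterns, for instance \(j_1=j_0=c=\sigma-1\) or \(c=\sigma-2\), with \(e\) fixed. The goal is to find a starting index \(i_0\) and a small number of steps, independent of \(\sigma\) or linear in \(\sigma\), after which \(\mathrm{LF}^m(i_0)=i_0\). Each step would be computed symbolically from the closed form. I would also sanity-check the candidate against a hand computation for \(\sigma=3\), where \(L_{3,4}=2(001122)^8\,00121\) and \(C=(0,18,36)\). As a fallback, a counting argument would also suffice. One option is to show that LF maps a proper nonempty set of positions, such as a union of residue classes determined by \(e\) and the digit sum modulo \(\sigma-1\), into itself. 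It is then enough to check that the only positions escaping this invariant are the last \(2\sigma-1\) tail positions and position \(0\), and that these do not connect all classes.

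The main obstacle is the boundary. This consists of the leading symbol at position \(0\), the shortened final period \(00\cdots(\sigma-3)(\sigma-3)(\sigma-2)(\sigma-1)(\sigma-2)\), and the \(+[c=\sigma-1]\) correction. Together they break both the pure digit rotation and the invariance of \(e\), and they could in principle splice all regular cycles into one. So after fixing the candidate cycle or invariant set, I would verify directly that none of its elements has \(j=\sigma^2-1\) or \(i=0\), and that none maps to such a position. This amounts to checking a few explicit digit conditions, which I would handle case by case in \(c\). Doing this uniformly for all \(\sigma\geq3\) is where I expect the real work to lie.
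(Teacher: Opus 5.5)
Your framework matches the paper's: by the cBWT/LF characterization it suffices to exhibit a proper LF cycle. Your closed form $\mathrm{LF}(i)=2\sigma^2c+2j+e+[c=\sigma-1]$ on $1\le i\le 2\sigma(\sigma^2-1)$ is correct, and it reproduces the paper's $\sigma=3$ orbit $4\to19\to6\to38\to13\to4$. But the proposal stops exactly where the proof starts. The whole content of the proposition is a certificate valid for every $\sigma\ge3$, and you never produce one; ``I would track orbits'' and ``the goal is to find $i_0$'' describe a search, not an argument.

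The paper supplies that certificate. For $\sigma=3$ it is the $5$-cycle above. For $\sigma\ge4$ it is an explicit orbit of length $6\sigma-3$ through position $4$: a $12$-step head, a bridge step, a repeating phase indexed by $d=\sigma-3,\ldots,1$, and a $3$-step tail back to $4$. All its positions lie in your regular range, so each step is a direct substitution into your formula. Your near-constant-digit instinct is in fact where it lives: in your notation $(j_1,j_0,c;e)$ the repeating phase is
\[
\begin{aligned}
&(d{+}2,d,d;1)\to(d,d{+}2,d;0)\to(d,d,d{+}1;1)\to(d{+}1,d,d;0)\\
&\quad\to(d,d{+}1,d{-}1;1)\to(d{-}1,d,d{+}1;0)\to(d{+}1,d{-}1,d{-}1;1).
\end{aligned}
\]
What is missing is the proof itself: finding the head and tail that close this phase into a cycle, and checking that they avoid position $0$ and the last $2\sigma-1$ positions.

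The supporting heuristics are also off, and the fallback cannot work. The bit $e$ is not preserved. The offset $-1$ acts on the whole index $i-1=2N+e$, so $e$ flips at every step with $c\ne\sigma-1$ and is kept only when $c=\sigma-1$. Since the digit rotation preserves $N\bmod(\sigma-1)$, what actually holds on the regular range is
\[
\mathrm{LF}(i)-1\equiv(i-1)-1+[c=\sigma-1]\pmod{2(\sigma-1)}.
\]
Three consequences follow.
\begin{itemize}
\item Constant patterns do not close up: from $(c,c,c;1)$ with $c<\sigma-1$, LF goes to $i-1$ and then to $i-2$.
\item Any cycle inside the regular range needs a positive multiple of $2(\sigma-1)$ steps with $c\ne\sigma-1$. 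A cycle using only $c=\sigma-1$ steps would sit at digits $(\sigma-1,\sigma-1,\sigma-1)$, which lies outside the range. So a cycle of $\sigma$-independent length is impossible; the paper's $\sigma\ge4$ orbit has $6\sigma-6$ such steps.
\item The classes ``determined by $e$ and the digit sum modulo $\sigma-1$'' are exactly the residue classes of $i-1$ modulo $2(\sigma-1)$. An LF-invariant union of them must be closed under $x\mapsto x-1$, so it is everything.
\end{itemize}

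Finally, the intuition that ``rotation plus a small offset should split into many cycles'' proves nothing by itself. $L_{\sigma,3}$ has the same structure with two digits, yet it equals $\mathrm{cBWT}(B_\sigma^{(3)})$ and is therefore a single LF cycle. The paper also reports that $L_{\sigma,5}$ is LF-valid for $\sigma\in\{3,4\}$. Failure at $k=4$ therefore rests on a specific orbit, which you must actually exhibit and verify.
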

\begin{proof}
    It suffices to exhibit a proper LF cycle. For \(\sigma = 3\), the LF orbit \(4 \to 19 \to 6 \to 38 \to 13 \to 4\) has length \(5 < 54 = |L_{3,4}|\), so the LF permutation is not a single cycle.

    For \(\sigma \geq 4\), Table~\ref{tab:cycle} gives a symbolic LF-cycle certificate. In each row, \(L[i] = L_{\sigma,4}[i]\), and the displayed \(L[i]\) and \(\rho\) values are obtained directly from the block form of \(L_{\sigma,4}\); substituting them into \(\mathrm{LF}(i) = 2L[i]\sigma^2 + \rho\) gives the corresponding next step of the orbit. All displayed positions lie in \([0,2\sigma^3)\) for \(\sigma \geq 4\).
    
    The labels H, R, and T denote the head, repeating, and tail parts of the displayed orbit. The rows R0--R5 depend on a parameter \(d\in\{1,\ldots,\sigma-3\}\); in the orbit, \(d\) decreases from \(\sigma-3\) to \(1\).

    \begin{table}[t]
    \centering
    \caption{LF-cycle certificate for \(L_{\sigma,4}\), \(\sigma \geq 4\). Here \(\rho = \mathrm{rank}(L_{\sigma,4}[i],i)\).}
    \label{tab:cycle}
    \scriptsize
    \setlength{\tabcolsep}{2pt}
    \renewcommand{\arraystretch}{0.86}
    \resizebox{\textwidth}{!}{%
    \begin{tabular}{@{}clcl@{\qquad}clcl@{}}
    \toprule
    Step & Position \(i\) & \(L[i]\) & \(\rho\)
    &
    Step & Position \(i\) & \(L[i]\) & \(\rho\) \\
    \midrule
    H0 & \(4\) & \(1\) & \(1\)
    &
    H1 & \(2\sigma^2{+}1\) & \(0\) & \(2\sigma\) \\
    
    H2 & \(2\sigma\) & \(\sigma{-}1\) & \(2\)
    &
    H3 & \(2\sigma^3{-}2\sigma^2{+}2\) & \(0\) & \(2\sigma^2{-}2\sigma{+}1\) \\
    
    H4 & \(2\sigma^2{-}2\sigma{+}1\) & \(0\) & \(2\sigma{-}2\)
    &
    H5 & \(2\sigma{-}2\) & \(\sigma{-}2\) & \(1\) \\
    
    H6 & \(2\sigma^3{-}4\sigma^2{+}1\) & \(0\) & \(2\sigma^2{-}4\sigma\)
    &
    H7 & \(2\sigma^2{-}4\sigma\) & \(\sigma{-}1\) & \(2\sigma{-}4\) \\
    
    H8 & \(2\sigma^3{-}2\sigma^2{+}2\sigma{-}4\) & \(\sigma{-}3\) & \(2\sigma^2{-}2\sigma{+}1\)
    &
    H9 & \(2\sigma^3{-}4\sigma^2{-}2\sigma{+}1\) & \(0\) & \(2\sigma^2{-}4\sigma{-}2\) \\
    
    H10 & \(2\sigma^2{-}4\sigma{-}2\) & \(\sigma{-}2\) & \(2\sigma{-}5\)
    &
    H11 & \(2\sigma^3{-}4\sigma^2{+}2\sigma{-}5\) & \(\sigma{-}3\) & \(2\sigma^2{-}4\sigma\) \\
    
    R\(0^*\) & \(2\sigma^3{-}4\sigma^2{-}4\sigma\) & \(\sigma{-}1\) & \(2\sigma^2{-}4\sigma{-}4\)
    &
    R1 & \(2\sigma^2d{+}2\sigma d{+}4\sigma^2{+}2d{+}2\) & \(d\) & \(2\sigma d{+}4\sigma{+}2d{+}1\) \\
    
    R2 & \(2\sigma^2d{+}2\sigma d{+}4\sigma{+}2d{+}1\) & \(d\) & \(2\sigma d{+}2d{+}4\)
    &
    R3 & \(2\sigma^2d{+}2\sigma d{+}2d{+}4\) & \(d{+}1\) & \(2\sigma d{+}2d{+}1\) \\
    
    R4 & \(2\sigma^2d{+}2\sigma d{+}2\sigma^2{+}2d{+}1\) & \(d\) & \(2\sigma d{+}2\sigma{+}2d\)
    &
    R5 & \(2\sigma^2d{+}2\sigma d{+}2\sigma{+}2d\) & \(d{-}1\) & \(2\sigma d{+}2d{+}3\) \\
    
    R0 & \(2\sigma^2d{+}2\sigma d{+}2\sigma{+}2d{+}5\) & \(d{+}2\) & \(2\sigma d{+}2d{+}2\)
    &
    T0 & \(2\sigma{+}5\) & \(2\) & \(2\) \\
    
    T1 & \(4\sigma^2{+}2\) & \(0\) & \(4\sigma{+}1\)
    &
    T2 & \(4\sigma{+}1\) & \(0\) & \(4\) \\
    \bottomrule
    \end{tabular}%
    }
    \end{table}
    
    The cycle consists of the twelve head steps H0--H11, the bridge step R\(0^*\), the repeating phase with parameter \(d = \sigma - 3,\sigma - 4,\ldots,1\), and the tail T0--T2, which closes the orbit back to H0. The bridge step R\(0^*\) enters R1 with \(d = \sigma - 3\). For each fixed \(d\), the orbit follows R1--R5, and R5 with parameter \(d\) maps to R0 with parameter \(d - 1\). After R5 with \(d = 1\), the orbit enters the tail. Therefore, the total length is \(12 + 1 + 5 + 6(\sigma - 4) + 3 = 6\sigma - 3\).
    
    The listed positions are pairwise distinct for \(\sigma \geq 4\): within the repeating phase, this follows from the common term \(2(\sigma^2 + \sigma + 1)d\) and distinct offsets, and the head, bridge, and tail positions are checked directly from the displayed formulas. Since \(6\sigma - 3 < 2\sigma^3 = |L_{\sigma,4}|\), this is a proper LF cycle. Hence, the LF permutation is not a single cycle, and \(L_{\sigma,4}\) is not a valid cBWT. \qed
\end{proof}

For \(\sigma = 2\), the candidate \(L_{2,4}\) has a single LF cycle, consistent with~\cite{date2025neartightness}. For \(k = 5\), we computationally evaluated the \(\mathrm{LF}\) permutation of \(L_{\sigma,5}\), checked that it is a single cycle for \(\sigma \in \{3,4\}\), inverted it to recover the cyclic string, and verified the successor-bispecial condition. For the rotation starting with \((\sigma - 1)^5\), direct computation of the terminated BWT gives \(r = r_c = \sigma^{k-1} + 2\) for these instances. Thus, using \(\chi = 2\sigma^{k-1} + 1\) (the argument of Lemma~\ref{lemma:smallest_set} applies verbatim at order \(k\)), the \(\sigma = 3\) instance gives \(r = r_c = 83\), \(\chi = 163\), and \(163/83 \approx 1.964\), while the \(\sigma = 4\) instance gives \(r = r_c = 258\), \(\chi = 513\), and \(513/258 \approx 1.988\).

The main open problem is to characterize the pairs \((\sigma,k)\) for which the lower bound \(\sigma^{k-1}+2\) is attainable. Equivalently, one may ask when the canonical boundary-merged candidate \(L_{\sigma,k}\) is LF-valid, or whether other cBWT patterns can attain the same bound. It is also open whether the \(k = 4\) obstruction is only a failure of \(L_{\sigma,k}\) or reflects a broader limitation on low-run constructions at order~\(4\).

%
%
\bibliographystyle{splncs04}
\bibliography{sources}

@book{lothaire1997,
  author = {M. Lothaire},
  title = {Combinatorics on Words},
  publisher = {Cambridge University Press},
  year = {1997}
}

@techreport{burrows1994,
  author = {Burrows, Michael and Wheeler, David},
  title = {A Block-sorting Lossless Data Compression Algorithm},
  institution = {Digital SRC},
  number = {124},
  year = {1994}
}

@book{lidl1997finite,
  title={Finite fields},
  author={Lidl, Rudolf and Niederreiter, Harald},
  number={20},
  year={1997},
  publisher={Cambridge University Press}
}

@article{higgins2012burrows,
  title={{B}urrows--{W}heeler transformations and de {B}ruijn words},
  author={Higgins, Peter M},
  journal={Theoretical Computer Science},
  volume={457},
  pages={128--136},
  year={2012},
  publisher={Elsevier}
}

@inproceedings{fici2025generalized,
  author    = {Fici, Gabriele and Gabory, Est{\'e}ban},
  title     = {Generalized {De} {B}ruijn Words, Invertible Necklaces, and the {B}urrows--{W}heeler Transform},
  booktitle = {50th International Symposium on Mathematical Foundations of Computer Science (MFCS 2025)},
  series    = {Leibniz International Proceedings in Informatics (LIPIcs)},
  volume    = {345},
  pages     = {48:1--48:18},
  publisher = {Schloss Dagstuhl -- Leibniz-Zentrum f{\"u}r Informatik},
  year      = {2025},
  doi       = {10.4230/LIPIcs.MFCS.2025.48}
}

@article{depuydt2023suffixient,
  title={Suffixient sets},
  author={Depuydt, Lore and Gagie, Travis and Langmead, Ben and Manzini, Giovanni and Prezza, Nicola},
  journal={arXiv preprint arXiv:2312.01359},
  year={2023}
}

@inproceedings{cenzato2024computing,
  author    = {Cenzato, Davide and Olivares, Francisco and Prezza, Nicola},
  title     = {On Computing the Smallest Suffixient Set},
  booktitle = {String Processing and Information Retrieval},
  series    = {Lecture Notes in Computer Science},
  volume    = {14899},
  pages     = {73--87},
  publisher = {Springer},
  year      = {2024},
  doi       = {10.1007/978-3-031-72200-4_6}
}

@inproceedings{navarro2025smallest,
  author    = {Navarro, Gonzalo and Romana, Giuseppe and Urbina, Cristian},
  title     = {Smallest Suffixient Sets as a Repetitiveness Measure},
  booktitle = {String Processing and Information Retrieval},
  series    = {Lecture Notes in Computer Science},
  volume    = {16073},
  pages     = {217--232},
  publisher = {Springer},
  year      = {2025},
  doi       = {10.1007/978-3-032-05228-5_18}
}

@inproceedings{mantaci2017burrows,
  title={{B}urrows--{W}heeler transform and run-length enconding},
  author={Mantaci, Sabrina and Restivo, Antonio and Rosone, Giovanna and Sciortino, Marinella},
  booktitle={International Conference on Combinatorics on Words},
  pages={228--239},
  year={2017},
  organization={Springer}
}

@misc{date2025neartightness,
  author        = {Date, Vinicius T. V. and Zatesko, Leandro M.},
  title         = {On the near-tightness of $\chi \leq 2r$: a general $\sigma$-ary construction and a binary case via {LFSR}s},
  year          = {2025},
  eprint        = {2512.20598},
  archivePrefix = {arXiv},
  note          = {To appear in LATIN 2026, Lecture Notes in Computer Science, Springer}
}

@inproceedings{likhomanov2011two,
  author    = {Likhomanov, Konstantin M. and Shur, Arseny M.},
  title     = {Two Combinatorial Criteria for {BWT} Images},
  booktitle = {Computer Science -- Theory and Applications},
  series    = {Lecture Notes in Computer Science},
  volume    = {6651},
  pages     = {385--396},
  publisher = {Springer},
  year      = {2011},
  doi       = {10.1007/978-3-642-20712-9_30}
}

\end{document}